\newtheorem{theorem}{Theorem}
\newtheorem{ass}{Assumption}
\newtheorem{lemma}{Lemma}
\newtheorem{proof}{Proof}
\newtheorem{prob}{Problem}
\newcommand{\A}{{\cal A}}
\newcommand{\B}{{\cal B}}
\newcommand{\C}{{\cal C}}
\newcommand{\D}{{\cal D}}
\newcommand{\Hm}{{\mathbf H}}
\begin{document}

\title{System Identification Under Multi-rate Sensing Environment}
\author{Hiroshi Okajima, Risa Furukawa and Nobutomo Matsunaga}
\markboth{H. Okajima, R. Furukawa and N. Matsunaga}{System Identification Under Multi-rate Sensing Environment}
\maketitle

\begin{abstract}
	\noindent (Abstract) This paper proposes a system identification algorithm for systems with multirate sensors in a discrete-time framework. It is challenging to obtain an accurate mathematical model when the ratios of the inputs and outputs are different. A cyclic reformulation-based model for multirate systems is formulated, and the multirate system can be reduced to a linear time-invariant system to derive the model under the multirate sensing environment. The proposed algorithm integrates a cyclic reformulation with a state coordinate transformation of the cycled system to enable precise identification of systems under the multirate sensing environment. The effectiveness of the proposed system identification method is demonstrated through numerical simulations. 
\end{abstract}

\section{Introduction}\label{sec1}

System identification has been a well-established research field for decades and continues to evolve with ongoing advancements\cite{id0,id1,id2,id3,id4,id5}. It is a fundamental process for constructing mathematical models of unknown dynamic systems, which play a crucial role in enhancing the accuracy and performance of control system design. In applications such as optimal control for robotics and process control, accurately modeling system dynamics is essential for achieving the desired performance. A representative control methodology that utilizes mathematical models is model-based control, where the mathematical model is explicitly integrated into the controller. Recent research has focused on extending system identification techniques to address challenges in nonlinear systems \cite{non1}, periodic systems \cite{lpv, lptv}, and their practical applications in real-world systems \cite{ouyou}.

This paper focuses on control systems that incorporate multiple sensors. In many control systems, multiple sensors are employed to construct a feedback control loop. When these sensors are of different types, their sampling rates also differ. In such cases, system identification must be performed using signals with different sampling rates for inputs and outputs.

Several approaches have been proposed for multirate systems. In \cite{rev1-5}, an optimization-based algorithm for state estimation of multirate control systems was presented to ensure convergence in state estimation between sampling instants. For control applications, in \cite{rev1-2}, an adaptive servo control law with input constraints was derived for linear time-invariant systems with unknown parameters and two sampling rates. Similarly, in \cite{rev1-6}, a self-tuning control algorithm was developed for discrete-time, linear, multi-variable systems with skew output sampling and input updating mechanisms. Additionally, in \cite{rev1-e}, a parametric design method for dual-rate controllers was proposed to improve steady-state intersample response.

The demand for multirate system identification is particularly evident in mobile robot control \cite{mobile0,mobile1,mobile2,mobile3,mobile4}, where various sensors with different sampling capabilities are employed for self-localization, perception, and control. When implementing model-based control frameworks for mobile robots, system identification must be performed in these multirate sensor environments, making this an important area of research.

A multirate control system can be interpreted as a system where the input and output datasets have missing values. System identification must be performed for such systems, making the problem significantly more challenging compared to the mathematical modeling of a linear time-invariant system with complete data.

Several methods exist for system identification of multirate systems\cite{senkou01,senkou02}. 
In \cite{senkou01}, a multirate identification problem is addressed by dividing the multirate sampled system into different subsystems, and a multirate distributed model predictive control technique is proposed. In \cite{senkou02}, the multi-innovation identification theory is applied to estimate
the parameters of the multirate input models and to present a multi-innovation stochastic gradient algorithm for the multirate systems from the input and output data. The system in \cite{senkou02} is a system for the multirate input signal. However, the issue of multirate in the output presents a more challenging problem and has not been resolved in this research. Furthermore, the identification of general single-input and single-output multirate sampled-data systems was addressed using lifting techniques to extract fast-rate discrete-time systems in \cite{rev1-1}. Similarly, in \cite{rev1-4}, subspace identification algorithms were employed to identify lifted multirate systems and applied them to long-range predictive control. While these studies made significant contributions, there remains room for further investigation into systems with heterogeneous sampling rates across multiple sensors. In particular, systems identified as time-invariant systems through lifting contain variable products, making it difficult to revert back to the original multirate system. Therefore, solving the system identification problem for systems with multirate sensing environments is both beneficial and important.

This paper proposes a system identification algorithm for systems with multirate sensors in a discrete-time framework. Accurately deriving a mathematical model becomes challenging when the input and output sampling rates differ. In our previous study \cite{okajima}, a cyclic reformulation-based system identification algorithm was proposed for linear periodically time-varying systems (LPTV systems). A similar technique for the system identification for LPTV systems is applied in this paper. First, we formulate the multirate system as a structure of the LPTV system. Then, the formulated multirate system is reduced to a linear time-invariant system using cyclic reformulation, which is proposed by Bittanti et al. (\cite{cyc1,cyc2}). In this paper, we extend the results of the previous study \cite{okajima} to solve the identification problem of multirate control systems. 

This paper is organized as follows. Section \ref{sec2} defines the state-space representation of the multirate system. Then, we describe the representation method of cyclic reformulation, which is a time-invariant method used to handle multirate systems as time-invariant systems in Section \ref{sec3}. We also describe the properties of the cyclic reformulation. In Section \ref{sec4}, we propose a system identification algorithm for the multirate systems. The proposed system identification algorithm integrates a cyclic reformulation with a state coordinate transformation of the cycled system to enable precise identification of systems with multirate sensors. In Section \ref{sec5}, we present the effectiveness of the proposed system identification algorithm using numerical examples and verifying the obtained model.

\section{Problem Formulation}\label{sec2}
This section describes the system representation of multirate systems in the context of discrete-time periodic time-varying systems. As a preparatory step before addressing multirate systems, we first consider single-rate systems (as discrete-time linear time-invariant systems). Specifically, the linear time-invariant system is expressed as shown in (\ref{siki1}).
\begin{eqnarray} 
x(k+1)&=&Ax(k)+Bu(k)\label{siki1}\\
y(k)&=&Cx(k)+ Du(k)\label{siki12}
\end{eqnarray}
$k$ is the sampling period, $x \in R^{n}$ is a state, $u \in R^{m}$ and $y \in R^{l}$ are input and output for the system. $A \in R^{n\times n}, B \in R^{n\times m}, C \in R^{l\times n}, D \in R^{l\times m}$ are given. The system (\ref{siki1}), (\ref{siki12}) is assumed to be controllable and observable. In addition, we assume the matrix rank of $A$ is $n$. 

Next, based on the single-rate system (\ref{siki1}), (\ref{siki12}), we consider the representation of a multirate system, where the output observation periods differ from the control period\cite{multi1,multi2}. We obtain the following state-space system (\ref{eq:P_multi}), (\ref{eq:P_multi2}) as a representation of the multirate system using the period-time-varying matrix $V_k$.
\begin{eqnarray}
x(k+1)&=&Ax(k)+Bu(k)\label{eq:P_multi}\\
y(k)&=&V_kCx(k)+V_kDu(k) \label{eq:P_multi2}
\end{eqnarray}
The observation periods of the outputs, which are the components in $y(k)$, are assumed to be different. Each observation period of the outputs is set as a natural multiple of the control period. $M_1, \cdots, M_l$ are the observation periods of the outputs $y_1(k), \cdots, y_l(k)$. For example, the output $y_1(k)$ is observed once every $M_1$ step, and $y_1(k)=0$ at other times. 

The periodically time-varying matrix $V_k$ is determined as follows to characterize the above-mentioned observation period.
When $M$ is the least common multiple of the observation period $M_1, \cdots, M_l$, $V_k$ is a matrix to characterize the observation period of the output, and its period is given as $M$.
\begin{eqnarray}\label{eq:Sk}
V_{k}=\mathrm{diag}\left[\begin{array}{ccccc}
 v_{1}(k) & \cdots & v_{i}(k) & \cdots & v_{l}(k) 
\end{array}\right]
\end{eqnarray}
The element $v_i(k)$ in the matrix corresponds to $y_i(k)$ with the $i$-th observation period $M_{i}$. The component $v_i(k)$ is set to $1$ when the output signal $y_i$ is observed because the observation period $M_i$ is a divisor of $M$ for any $i$, and the timing at which the output signal is not observed is set to $v_i(k)=0$. Let $V_k$ be a matrix of period $M$, and the $i$-th element $v_i(k)$ equals $1$ exactly $M/M_i$ times in an $M$ period. 

Since the least common multiple of $M_1, \cdots, M_m$ is $M$, the sensing timing of all sensors can be represented using a framework with period $M$. Therefore, the following equation holds.
\begin{eqnarray}\label{eq:Sk2}
V_{k}=V_{k\,\bmod\,M}
\end{eqnarray}
In this case, $\{V_0, \cdots, V_{M-1}\}$ can be prepared to represent a multirate system. 

Then, we give an assumption about the observability of the multirate system, which is expressed as $M$-periodic system, as stated in Assumption \ref{ass2}. 
\begin{ass}\label{ass2}
There exists at least one $j$ that an observability pair $(V_j C, A^M)$ is observable. In other words, an observability matrix rank is $n$ for the observability pair $(V_j C, A^M)$. 
\end{ass}
From the above, the multirate system (\ref{eq:P_multi}), (\ref{eq:P_multi2}) is represented as a periodic time-varying system with period $M$. By adopting this representation, the characteristics of the multirate system are incorporated into $V_k$, reducing the parameters to be identified in modeling to four parameters: $A, B, C, D$ (Figure. \ref{figure1}). 

\begin{prob}\label{prob1}
Consider the case where input data $\{u(k)\}$ is applied to the system under a multirate sensing environment (\ref{eq:P_multi}), (\ref{eq:P_multi2}) to obtain output data $\{y(k)\}$. Based on the input-output data $\{u(k), y(k)\}_{k=0}^{N-1}$, estimate $A$, $B$, $C$ and $D$ up to state coordinate transformation.
\hfill $\Box$
\end{prob}
We denote the system matrices $A_m$, $B_m$, $C_m$, and $D_m$ as the solutions of Problem \ref{prob1}. 

\begin{figure}[!h]
\centering
\includegraphics[width=0.45\textwidth]{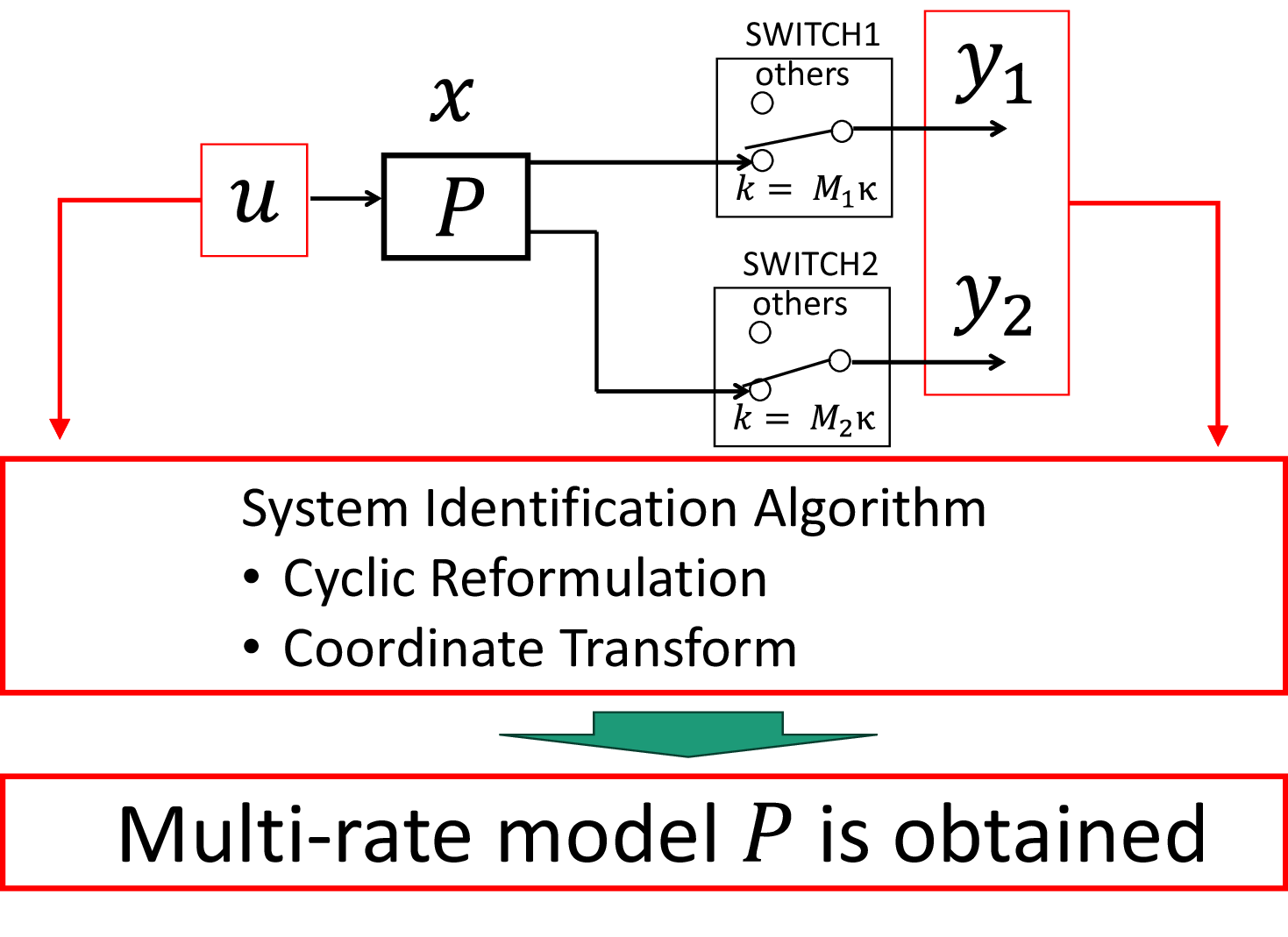}
\caption{System Identification for Multirate Sensing Environments}
\label{figure1}
\end{figure}

In addition to the observability conditions, we consider the following matrix $F \in {R}^{n\times nl}$ for convenience. 
\begin{eqnarray}
F = \begin{bmatrix} F_0 & F_1 & \cdots & F_{n-1}\end{bmatrix}, \label{largef}
\end{eqnarray}
Their elements $F_i (i=0,\cdots,n-1)$ are $n\times l$ matrices. When $(V_j C,A^M)$ is observable, it is possible to choose appropriate matrices $F$ such that the matrix rank of the following $n\times n$ matrices $X_j$ becomes $n$. 
\begin{eqnarray}
X_j := F \begin{bmatrix}V_{j}C\\V_{j}C A^M \\V_{j} C A^{2M} \\ \vdots \\ V_{j} CA^{(n-1)M}\end{bmatrix}\label{fk} 
\end{eqnarray}

As a simple example of choosing $F$, we consider the case where $l = 1$. The size of each $F_j$ is $n \times 1$. The following components $(F_j)_i$: 
\begin{eqnarray}\label{fji}
(F_j)_i = \begin{cases} 1, i = j\\0, i\neq j\end{cases}
\end{eqnarray}
is one obvious choice for $F_j$ such that $F$ satisfy the rank condition of $X_j$. Then, we can see that $F$ is given as an identity matrix $I_{n}$. It is not difficult to choose matrices $F$ which satisfy the conditions of matrix rank. 

\section{Cyclic Reformulation of Multirate System}\label{sec3}
\subsection{Time-invariant system expression} \label{sec22}
The cyclic reformulation, which is a time-invariant system expression method, is introduced in this section. 
The lifting reformulation is the most traditional method for obtaining a time-invariant system from an LPTV system. The lifting operation consists of packaging the values of a signal over one period in an extended signal. On the other hand, a method used in this paper is a cyclic reformulation \cite{cyc1,cyc2}. 

First, a cycled input signal is determined based on the input for (\ref{eq:P_multi}), (\ref{eq:P_multi2}) as follows. 
\begin{eqnarray}
\lefteqn{\check u(0) = \begin{bmatrix}u(0)\\O_{m,1}\\ \vdots\\ O_{m,1}\end{bmatrix}, \check u(1) = \begin{bmatrix}O_{m,1}\\u(1)\\ \vdots\\ O_{m,1}\end{bmatrix}, \cdots, }\label{checku}\\&\check u(\!M\!-\!1\!) = \begin{bmatrix}O_{m,1}\\ \vdots\\O_{m,1}\\ u(\!M\!-\!1\!)\end{bmatrix}, \check u(M) = \begin{bmatrix}u(M)\\O_{m,1}\\ \vdots\\ O_{m,1}\end{bmatrix} \cdots \nonumber
\end{eqnarray}
The cycled input $\check u(k)\in {R}^{Mm}$ is obtained by using the input $u(k)$. $\check u(k)$ has a unique non-zero sub-vector $u(k)$ at each time-point. The sub-vector $u(k)$ cyclically shifts along the column blocks. 

Then, the cyclic reformulation of the $M$-periodic system (\ref{eq:P_multi}), (\ref{eq:P_multi2}) is described by
\begin{equation}
 \label{eqcyclic}
     \begin{array}{rcl}
          \check{x}(k+1) & = & \check{A}\check{x}(k) + \check{B}\check{u}(k)\\
          \check{y}(k) & = & \check{C}\check{x}(k) + \check{D}\check{u}(k), 
     \end{array}
\end{equation}
where matrices $\check{A}, \check{B}, \check{C}, \check{D}$ are given as follows. 
\begin{equation}
     \check{A} = \left[\begin{array}{ccccc}
          O_{n,n} & \cdots & \cdots& O_{n,n} & A\\
          A &  O_{n,n} & \cdots &  O_{n,n} &  O_{n,n}\\
           O_{n,n} & A & \ddots & \vdots & \vdots\\
          \vdots & \ddots & \ddots &  O_{n,n} & \vdots\\
           O_{n,n} & \cdots &  O_{n,n} & A &  O_{n,n} 
     \end{array} \right],\label{checka}
\end{equation}
\begin{equation}
     \check{B} = \left[\begin{array}{ccccc}
           O_{n,m} & \cdots & \cdots& O_{n,m} & B\\
          B & O_{n,m} & \cdots & O_{n,m} & O_{n,m}\\
          O_{n,m} & B & \ddots & \vdots & \vdots\\
          \vdots & \ddots & \ddots & O_{n,m} & \vdots\\
          O_{n,m} & \cdots & O_{n,m} & B & O_{n,m} 
     \end{array} \right],\label{checkb}
\end{equation}
\begin{equation}
     \check{C} = \left[\begin{array}{cccc}
          V_0 C & O_{l,n} & \cdots &O_{l,n} \\
          O_{l,n}  & V_1 C & \ddots & \vdots\\
          \vdots & \ddots & \ddots & O_{l,n}\\
          O_{l,n}  & \cdots & O_{l,n}  & V_{M-1}C
     \end{array} \right],
\end{equation}
\begin{equation}
     \check{D} = \left[\begin{array}{cccc}
          V_0 D & O_{l,m} & \cdots &  O_{l,m} \\
           O_{l,m}  & V_{1}D & \ddots & \vdots\\
          \vdots & \ddots & \ddots &  O_{l,m} \\
           O_{l,m}  & \cdots &  O_{l,m}  & V_{M-1}D
     \end{array} \right].
\end{equation}
The dimensions of each matrix are given as $\check{A}\in {R}^{Mn\times Mn}$, $\check{B}\in {R}^{Mn\times Mm}$, $\check{C}\in {R}^{Ml\times Mn}$ and $\check{D}\in {R}^{Ml\times Mm}$. The matrix structures of $\check{A}$ and $\check{B}$ are named as cyclic matrices. The structures of $\check{C}$ and $\check{D}$ are block diagonal matrices. The dimensions of the state and output are given as $\check{x}(k)\in {R}^{Mn}$, $\check{y}(k)\in {R}^{Ml}$. 

The initial state $\check{x}(0)$ is given by using $x(0)$ as follows.  
\begin{eqnarray}
\check{x}(0) = \left[\begin{array}{c}
     x(0) \\ O_{n,1} \\ \vdots \\ O_{n,1}\end{array}\right].\label{siki18}
\end{eqnarray}
Then, $\check x(1)$ can be obtained by using (\ref{eqcyclic}), (\ref{siki18}) with $\check u(0)$ as follows. 
\begin{eqnarray}
     \check{x}(1) = \left[\begin{array}{c}
           O_{n,1} \\
          Ax(0) + Bu(0)\\
           O_{n,1}\\
          \vdots\\
           O_{n,1}
     \end{array} \right]
\end{eqnarray}
A sub-vector in $\check x(1)$ exactly corresponds to $x(1)$ as defined in (\ref{siki1}). 
Furthermore, we can obtain the cycled state signal $\check x(k)$ and the cycled output signal $\check y(k)$ by using (\ref{eqcyclic}) and cycled input signal $\check u(k)$ by using step by step calculation. 

\subsection{Controllability and observability of cycled system}
The characteristics of observability and controllability of the cycled system (\ref{eqcyclic}) are presented in this section. 

For the cycled system (\ref{eqcyclic}), a controllability matrix can be written as follows. 
\begin{eqnarray}
{\Psi}_c = \left[\check B, \check A\check B, \cdots, \check A^{Mn-1}\check B\right]
\end{eqnarray}
The controllability of the system (\ref{eqcyclic}) is automatically satisfied when the pair $(A,B)$ is controllable. Then, the following condition is satisfied for the cycled system. 
\begin{eqnarray}
{\bf{rank}} {\Psi}_c = Mn
\end{eqnarray}

Then, an observability matrix for the $M$-periodic multirate system can be written as follows. 
\begin{eqnarray}
{\Psi}_o= \left[\begin{array}{c}\check C\\ \check C\check A\\ \vdots\\ \check C\check A^{Mn-1}\end{array}\right]
\end{eqnarray}
The matrix size of ${\Psi}_o$ is $Mln\times Mn$. 
We prove the observability of the pair $(\check C,\check A)$ by using Assumption \ref{ass2}, and the matrix rank of $A$ is $n$. 

First, we assume a pair $(V_j C, A^M)$ is observable based on Assumption \ref{ass2}. By using elementary row transformations for the observability matrix $\Psi_o$, a matrix $\tilde \Psi_o \in R^{Mln\times Mn}$ can be obtained as follows. 
\begin{eqnarray}
     \tilde \Psi_o = \left[\begin{array}{ccccc}
          \psi_0 & O_{n,n} & \cdots& O_{n,n} &  O_{n,n} \\
          O_{n,n}&  \ddots & O_{n,n}  &  \ddots  & O_{n,n} \\
          O_{n,n} & O_{n,n} & \psi_j & \ddots & \vdots\\
          \vdots & \ddots & \ddots & \ddots & O_{n,n} \\
           O_{n,n} & \cdots &  O_{n,n} & O_{n,n} & \psi_{M-1}\\
           * & * & * & * & * \\
            \vdots & \vdots & \vdots & \vdots & \vdots \\
             * & * & * & * & * 
     \end{array} \right]
\end{eqnarray}
Where the elements $\tilde \psi_i (i=0,\cdots,M-1)$ are selected as follows. 
\begin{eqnarray}
\psi_{i} &=&  \left[\begin{array}{c}V_j C\\ V_j C A^{M}\\ \vdots\\ V_j C A^{M(n-1)}\end{array}\right]A^{i-j+M}, i = 0,\cdots,M-1 \label{psij}
\end{eqnarray}
It is obvious that the matrix rank of $\psi_{i}$ is $n$ for all $i$. Therefore, we can see the matrix rank of $\tilde \Psi_o$ is $Mn$.

Since the matrix rank of $\tilde \Psi_o$ is $Mn$, the matrix rank of $\Psi_o$ is also $Mn$ because $\tilde \Psi_o$ is derived from elementary row transformation of $\Psi_o$. Then, the observability of the pair $(\check C,\check A)$ is satisfied. The following condition holds. 
\begin{eqnarray}
{\bf{rank}} {\Psi}_o = Mn
\end{eqnarray}
Note that the controllability and observability conditions for the LPTV system in \cite{okajima} are more difficult to satisfy than the conditions in this paper.

\subsection{Characteristics of Markov parameters}
The characteristics of Markov parameters for cyclic reformulation of the multirate system are presented. Markov parameters of LPTV systems were characterized in \cite{okajima}. Since the multirate system can be regarded as an LPTV system, the characteristics of Markov parameters are satisfied for the multirate system. 

Markov parameters $\check H(i)$ are known as coefficients of the impulse response of the system (\ref{eqcyclic}) and given by using state-space parameters $\check A$, $\check B$, $\check C$, and $\check D$. 
\begin{eqnarray}
\check H(i) = \begin{cases}\check D,&i = 0 \\ \check C \check A^{i-1} \check B, & i=1,2,\cdots \end{cases}\label{markov}
\end{eqnarray}
For a given positive integer $q$, we introduce a matrix $\check{S}_q$ as follows. 
\begin{equation}
     \check{S}_q = \left[\begin{array}{ccccc}
          O_{q,q} & I_q & O_{q,q} & \cdots & O_{q,q}\\
          O_{q,q} & O_{q,q} & I_q & \ddots & \vdots\\
          \vdots & \ddots & \ddots & \ddots & O_{q,q}\\
          O_{q,q} & \ddots & \ddots & \ddots & I_q\\
          I_q & O_{q,q} & \cdots & \cdots & O_{q,q} 
     \end{array} \right]   \label{sq}
\end{equation}
$\check S_q$ is determined for operating $M$-periodic systems. The matrix size of $\check S_q$ is $Mq\times Mq$. $\check S_q$ is a regular matrix, and its inverse matrix is a cyclic matrix. For any given block diagonal matrix $E \in R^{Mq\times Mq}$ with $q\times q$ block elements $E_i$, $\check S_q^{-1}E \check S_q$ also becomes a block diagonal matrix. It should be noted that the individual block elements $E_i$ in $\check S_q^{-1}E \check S_q$ are shifted by one element relative to $E$. 

By using the above matrix $\check S_q$, we provide the following important lemma for Markov parameters $\check H(i)$ of the cycled system\cite{okajima}.
\begin{lemma}\label{lemma3}
Consider the following $Ml\times Mm$ matrix.
\begin{eqnarray}
 \check S_l^i \check H(i+j)\check S_m^{j}
\end{eqnarray}
Then, $\check S_l^i \check H(i+j)\check S_m^j$ can be regarded as a block diagonal matrix with $l\times m$ block elements for any non-negative integers $i, j$. In addition, the following matrix: 
\begin{eqnarray}
 \check S_l^i \check H(i+j)\check S_m^{j-1}
\end{eqnarray}
can be regarded as a cyclic matrix. 
\end{lemma}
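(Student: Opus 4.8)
The plan is to reduce the whole statement to two elementary structural facts: that $\check A$ and $\check B$ are block-diagonal matrices composed with a power of the cyclic shift operator, and that the shift operators $\check S_n,\check S_m,\check S_l$ intertwine cleanly with block-diagonal matrices (this is exactly the property of $\check S_q$ recalled just before the lemma). Throughout I would write $\mathbf A:=\mathrm{diag}(A,\dots,A)$, $\mathbf B:=\mathrm{diag}(B,\dots,B)$, so that $\mathbf A^{p}=\mathrm{diag}(A^{p},\dots,A^{p})$, and use that $\check S_q$ is, blockwise, the cyclic down-shift, hence $\check S_q^{M}=I$ and conjugation of a block-diagonal matrix by $\check S_q$ merely relabels its blocks by one position.

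Step 1 (factorizations). I would first record $\check A=\mathbf A\,\check S_n^{-1}=\check S_n^{-1}\mathbf A$, the two forms agreeing because all diagonal blocks of $\mathbf A$ coincide; by induction this yields $\check A^{p}=\mathbf A^{p}\check S_n^{-p}$. Likewise $\check B=\mathbf B\,\check S_m^{-1}=\check S_n^{-1}\mathbf B$, and more generally $\check S_n^{-r}\mathbf B=\mathbf B\,\check S_m^{-r}$ for every integer $r$, again because all blocks of $\mathbf B$ are equal. The matrices $\check C=\mathrm{diag}(V_0C,\dots,V_{M-1}C)$ and $\check D=\mathrm{diag}(V_0D,\dots,V_{M-1}D)$ are already block diagonal.

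Step 2 (a normal form for $\check H$). For $i+j\ge 1$, substituting Step 1 into $\check H(i+j)=\check C\check A^{i+j-1}\check B$ and pushing every shift to the right gives
\begin{align*}
\check H(i+j) &= \check C\,\mathbf A^{i+j-1}\check S_n^{-(i+j)}\mathbf B = \bigl(\check C\,\mathbf A^{i+j-1}\mathbf B\bigr)\check S_m^{-(i+j)} =: \mathbf K\,\check S_m^{-(i+j)},
\end{align*}
where $\mathbf K=\mathrm{diag}\bigl(V_0CA^{i+j-1}B,\dots,V_{M-1}CA^{i+j-1}B\bigr)$ is block diagonal, being a product of block-diagonal matrices. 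The degenerate case $i=j=0$ is handled directly, since $\check H(0)=\check D$ is itself block diagonal.

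Step 3 (conclude). For the first assertion, $\check S_l^{i}\check H(i+j)\check S_m^{j}=\check S_l^{i}\mathbf K\,\check S_m^{-(i+j)}\check S_m^{j}=\check S_l^{i}\mathbf K\,\check S_m^{-i}$; a one-line block-index computation (block row $a$ is selected by $\check S_l^{i}$ at position $a+i$, block column $b$ by $\check S_m^{-i}$ also at $b+i$, and $\mathbf K$ is diagonal) shows the $(a,b)$ block vanishes unless $a\equiv b\ (\mathrm{mod}\ M)$, so the product is block diagonal with $l\times m$ blocks, its $a$-th block being $V_{a+i}CA^{i+j-1}B$ (indices mod $M$). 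For the second assertion, $\check S_l^{i}\check H(i+j)\check S_m^{j-1}=\bigl(\check S_l^{i}\mathbf K\,\check S_m^{-i}\bigr)\check S_m^{-1}$ is a block-diagonal matrix times $\check S_m^{-1}$, hence a cyclic matrix. I expect no genuine obstacle; the only point requiring care is bookkeeping the signs and residues of the shift exponents, together with treating $i=j=0$ (and $j=0$, where $\check S_m^{j-1}=\check S_m^{-1}$ is still meaningful) separately.
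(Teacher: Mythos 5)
Your proof is correct. Note that the paper itself does not prove Lemma \ref{lemma3}: it is imported from the earlier work \cite{okajima} and the text explicitly omits the argument, so there is no in-paper proof to compare against; your write-up therefore supplies a self-contained justification rather than an alternative to one. The route you take -- factoring the cyclic matrices as a block-diagonal part times a power of the shift, $\check A^{p}=\mathbf A^{p}\check S_n^{-p}$ and $\check S_n^{-r}\mathbf B=\mathbf B\,\check S_m^{-r}$ (both valid precisely because all diagonal blocks of $\mathbf A$ and $\mathbf B$ coincide), and then collapsing $\check H(i+j)$ to $\mathbf K\,\check S_m^{-(i+j)}$ -- is clean and has the added benefit of producing the diagonal blocks explicitly: the $a$-th block of $\check S_l^{i}\check H(i+j)\check S_m^{j}$ is $V_{a+i}CA^{i+j-1}B$ (indices mod $M$), which matches the numerical values of $\check H(1)\check S_1$, $\check H(2)\check S_1^2$, etc.\ displayed in Section \ref{sec41}. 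The boundary cases are handled adequately: $\check H(0)=\check D$ is block diagonal by construction, and for $j=0$ the exponent $j-1=-1$ is meaningful since $\check S_m^{-1}=\check S_m^{M-1}$. The only stylistic caveat is that the paper defines a ``cyclic matrix'' by the pattern of $\check A$ (blocks at positions $(a,a-1\bmod M)$), so when you assert that a block-diagonal matrix times $\check S_m^{-1}$ is cyclic you are implicitly using that $\check S_m^{-1}$ itself has exactly that pattern; one sentence making this explicit would close the argument completely.
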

Lemma \ref{lemma3} provides valuable properties that hold for the cyclic reformulations. Additionally, the following two lemmas are derived using Lemma \ref{lemma3}.

\begin{lemma}\label{lemma1}
Consider the following $Ml\times Mm$ matrix.
\begin{eqnarray}
\check S_l^i \check H(i)\label{mar1}
\end{eqnarray}
Then, $\check S_l^i \check H(i)$ is given as a block diagonal matrix with $l\times m$ block elements for any non-negative integer $i$. In addition, the following matrix: 
\begin{eqnarray}
\check S_l^{i-1} \check H(i) \label{mar2}
\end{eqnarray}
can be regarded as a cyclic matrix. 
\end{lemma}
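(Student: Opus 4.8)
The statement is essentially a specialization of Lemma \ref{lemma3}, so the plan is to read off both claims by substituting particular values of $i$ and $j$ into the two assertions of that lemma, after recalling that $\check S_l$ and $\check S_m$ are regular matrices with $\check S_q^{M} = I_{Mq}$ (so that negative powers are well defined, with $\check S_l^{-1} = \check S_l^{M-1}$).

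For the first claim, I would apply the block-diagonal part of Lemma \ref{lemma3} with $j = 0$. Since $\check S_m^{0} = I_{Mm}$, the matrix $\check S_l^{i}\check H(i+0)\check S_m^{0}$ reduces exactly to $\check S_l^{i}\check H(i)$, and Lemma \ref{lemma3} tells us this is a block diagonal matrix with $l\times m$ block elements for every non-negative integer $i$. That settles (\ref{mar1}) with no further computation.

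For the second claim, I would apply the cyclic part of Lemma \ref{lemma3} with the substitution that makes the upper index of $\check S_m$ vanish, i.e. take $j = 1$ and replace the ``$i$'' of Lemma \ref{lemma3} by $i-1$. Then $\check S_l^{(i-1)}\check H\big((i-1)+1\big)\check S_m^{1-1} = \check S_l^{i-1}\check H(i)\check S_m^{0} = \check S_l^{i-1}\check H(i)$, and Lemma \ref{lemma3} guarantees this is a cyclic matrix. The only point that needs a word of care is the boundary case $i = 0$, where $\check S_l^{i-1} = \check S_l^{-1}$; this is legitimate because $\check S_l$ is invertible, and in fact $\check S_l^{-1} = \check S_l^{M-1}$, so the expression still falls under the scope of Lemma \ref{lemma3} (whose indices may be reduced modulo $M$).

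I do not anticipate a genuine obstacle here, since everything follows from Lemma \ref{lemma3}; the only thing to be careful about is bookkeeping of the exponents and the invertibility of $\check S_l$ so that $\check S_l^{i-1}$ is meaningful at $i=0$. If a self-contained argument were preferred over citing Lemma \ref{lemma3}, the fallback would be to expand $\check H(i) = \check C\check A^{i-1}\check B$, note that $\check C,\check D$ are block diagonal while $\check A,\check B$ are cyclic, and track how left-multiplication by $\check S_l$ (which conjugates block-diagonal structure and shifts block indices by one) interacts with the cyclic shifts carried by the powers of $\check A$; but the cleanest route is simply the specialization described above.
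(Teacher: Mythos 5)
Your proof is correct and takes essentially the same route as the paper: the paper gives no separate argument for Lemma \ref{lemma1} beyond stating that it is ``derived using Lemma \ref{lemma3},'' and your substitutions ($j=0$ for the block-diagonal claim, and $i\mapsto i-1$, $j=1$ for the cyclic claim) are exactly the intended specializations. One small caution: for the boundary case $i=0$ of the cyclic claim, ``reducing the index modulo $M$'' does not literally place the expression inside Lemma \ref{lemma3}, since the argument of $\check H$ is tied to $i+j$ and $\check H$ is not periodic in its argument; the correct fix is the direct observation you also mention, namely that $\check H(0)=\check D$ is block diagonal and left-multiplication by $\check S_l^{-1}=\check S_l^{M-1}$ turns a block-diagonal matrix into a cyclic one.
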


\begin{lemma}\label{lemma2}
Consider the following $Ml\times Mm$ matrix.
\begin{eqnarray}
 \check H(i)\check S_m^i
\end{eqnarray}
Then, $ \check H(i)\check S_m^i$ can be regarded as a block diagonal matrix with $l\times m$ block elements for any non-negative integer $i$. In addition, the following matrix: 
\begin{eqnarray}
 \check H(i)\check S_m^{i-1}
\end{eqnarray}
can be regarded as a cyclic matrix. 
\end{lemma}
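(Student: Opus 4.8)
The plan is to obtain Lemma \ref{lemma2} as the degenerate case of Lemma \ref{lemma3} in which the left-hand shift exponent is set to zero, exactly as Lemma \ref{lemma1} is the degenerate case in which the right-hand shift exponent vanishes. Concretely, in the two assertions of Lemma \ref{lemma3} I would substitute the left exponent $i \mapsto 0$ and rename the right exponent $j \mapsto i$. Since $\check S_l^0 = I_{Ml}$, the first assertion of Lemma \ref{lemma3} becomes the statement that $\check H(i)\check S_m^i$ is a block diagonal matrix with $l\times m$ blocks, and the second assertion becomes the statement that $\check H(i)\check S_m^{i-1}$ is cyclic. This is precisely Lemma \ref{lemma2}, so no new computation is needed once Lemma \ref{lemma3} is available.

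For completeness I would separately dispose of the boundary value $i = 0$, where the claim reads: $\check H(0) = \check D$ is block diagonal, and $\check D\,\check S_m^{-1}$ is cyclic. The first holds by the very definition of $\check D$. For the second I would invoke the fact recalled in Section \ref{sec3} that $\check S_m^{-1}$ is itself a cyclic matrix (the block shift carrying $I_m$ in the block positions $(2,1),\dots,(M,M-1)$ and $(1,M)$); right-multiplying the block-diagonal matrix $\check D = \mathrm{diag}(V_0D,\dots,V_{M-1}D)$ by this shift moves the block $V_{i-1}D$ from position $(i,i)$ to position $(i,i-1)$ cyclically, which is exactly the pattern of a cyclic matrix. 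Hence the statement also holds at $i = 0$.

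If one preferred a self-contained argument not routed through Lemma \ref{lemma3}, the alternative would be a direct computation from $\check H(i) = \check C\check A^{i-1}\check B$ using that $\check C$ is block diagonal while $\check A$ and $\check B$ are cyclic: a power $\check A^{i-1}$ of a cyclic matrix is again, up to the appropriate power of the block shift $\check S_n$, a cyclic matrix, and right-multiplying by $\check S_m^i$ realigns the single nonzero block in each block-row onto the main block diagonal. I do not expect a genuine obstacle here; the only care needed is bookkeeping of the shift exponents modulo $M$ and the treatment of the wrap-around block, so the proof via Lemma \ref{lemma3} is the cleanest route and the one I would present.
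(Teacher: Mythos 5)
Your proposal is correct and follows essentially the same route as the paper, which explicitly states that Lemmas \ref{lemma1} and \ref{lemma2} are derived from Lemma \ref{lemma3}; specializing the left exponent to zero and using $\check S_l^0 = I_{Ml}$ gives exactly the two claims. Your additional treatment of the boundary case $i=0$ (block diagonality of $\check D$ by construction, and cyclicity of $\check D\check S_m^{-1}$ from the fact that $\check S_m^{-1}$ is itself a cyclic matrix) is a welcome detail the paper leaves implicit.
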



The characteristics shown in Lemmas \ref{lemma1} and \ref{lemma2} are essential ideas for identifying the $M$-periodic multirate system and are used later. 

In addition to the above lemmas, we provide useful properties for systems with cyclic reformulation. 
First, matrices $\check F_j$ of size $Mn\times Ml$ are determined as follows. 
\begin{equation}
     \check{F}_j = \left[\begin{array}{cccc}
          F_j & O_{n,l} & \cdots & O_{n,l}\\
          O_{n,l} & \ddots & \ddots & \vdots\\
          \vdots & \ddots  & F_j & O_{n,l}\\
          O_{n,l} & \cdots & O_{n,l} & F_j
     \end{array} \right]\label{fj}
\end{equation}
Elements $F_j$ are $n \times l$ matrices as presented in (\ref{largef}). 
We assume $F_j (j=0,\cdots,n-1)$ is determined to satisfy the following condition.
\begin{eqnarray}\label{rankf}
{\bf{rank}} \left[\begin{array}{ccccc}
F_0 & \cdots& F_i & \cdots & F_{n-1}
\end{array}\right]  = n
\end{eqnarray}
By using matrices $\check S_l$ in (\ref{sq}) and $\check F_j$ in (\ref{fj}), a matrix $\check X$ can be derived by the following calculation with cycled system parameters: 
\begin{eqnarray}
\check X = \sum_{i=0}^{n-1}\sum_{j = 0}^{M-1} \check F_j \check S_l^{j} \check C \check A^{Mi+j}. \label{checkx2}
\end{eqnarray}
The matrix rank of $\check X$ is $Mn$ if each $X_i$ in (\ref{fk}) is given as a regular matrix. The matrix rank of $\check X$ depends on the observability of the system, which is presented in (\ref{psij}), and the selection of $F$. Note that $\check X\check B$ has a cyclic matrix structure. 

Then, a matrix $\check G_j$ with the size ($Mm \times Mn$) is determined as follows. 
\begin{equation}
     \check{G_j} = \left[\begin{array}{cccc}
          G_j & O_{m,n} & \cdots & O_{m,n}\\
          O_{m,n} & \ddots & \ddots & \vdots\\
          \vdots & \ddots  & G_j & O_{m,n}\\
          O_{m,n} & \cdots & O_{m,n} & G_j
     \end{array} \right]. \label{gj}
\end{equation}
Elements $G_j$ are $m \times n$ matrices. 
We assume $G_j (j=0,\cdots,n-1)$ is determined to satisfy the following condition.
\begin{eqnarray}\label{rankg}
{\bf{rank}} \left[\begin{array}{c}
G_0\\ \vdots\\ G_i \\ \vdots \\ G_{n-1}
\end{array}\right]  = n
\end{eqnarray}
By using the above matrices $\check G_j$, a matrix $\check Y$ is obtained with $\check A, \check B$ and $\check S_m$. 
\begin{eqnarray}
\check Y = \sum_{i = 0}^{n-1}\sum_{j = 0}^{M-1}  \check A^{Mi+j}\check B \check S_m^{j+1} \check G_j  \label{checky2}
\end{eqnarray}
The matrix size of $\check Y$ is $Mn \times Mn$ and is a full-rank matrix. Note that $\check C\check Y$ is a block diagonal matrix structure. 
\section{System Identification Algorithm}\label{sec4}
\subsection{System identification using cycled signals}\label{sec41} 
We use the subspace identification method since we employ state-space models for parameter identification. The subspace identification method \cite{id1,id2} is a significant system identification method based on the state-space realization of linear time-invariant systems. The advantage of using the subspace identification method is that it can be easily applied to MIMO systems, and they use numerically stable algorithms such as singular value decomposition and QR decomposition. Therefore, it is possible to obtain an accurate state-space model. 

We apply the system identification method as follows: First, we apply input $u(k)$ for (\ref{eq:P_multi}) and obtain an output signal $y(k)$. Then, the cyclic reformulation is applied to the input and output data $(u,y)$, and we obtain cycled signals $\check u(k) \in R^{Mm}$ and $\check y(k) \in R^{Ml}$. Moreover, the subspace identification method is applied for the cycled signals $\check u(k), \check y(k)$ and obtain state-space model parameters ($\A_*,\B_*,\C_*,\D_*$). The matrix sizes are $\A_* \in R^{Mn\times Mn}$, $\B_* \in R^{Mn\times Mm}$, $\C_* \in R^{Ml\times Mn}$, $\D_* \in R^{Ml\times Mm}$. The state of the identified state-space model is denoted as $x_* \in R^{Mn}$. 

The Markov parameters for the obtained system by the subspace identification are given below. 
\begin{eqnarray}
\check \Hm (i) = \begin{cases}\D_*,&i = 0 \\ \C_*  \A_*^{i-1} \B_*, & i=1,2,\cdots \end{cases} 
\end{eqnarray}
The following $Ml\times Mm$ matrix is considered in the same manner as in Lemma \ref{lemma3}. 
\begin{eqnarray}
\check S_l^{i} \check \Hm (i+j) \check S_m^j. \label{mar12}
\end{eqnarray}

Then, we set up the following assumption related to Lemma \ref{lemma3} for the state-space model parameters ($\A_*,\B_*,\C_*,\D_*$). 

\begin{ass}\label{ass1}
For any $i, j (= 0,1,\cdots)$, the matrix $ \check S_l^{i} \check \Hm (i+j) \check S_m^j$ is a block diagonal matrix with $l\times m$ block elements, where all off-diagonal blocks are zero matrices. $\hfill  \Box$
\end{ass}

We verify that Assumption \ref{ass1} is reasonable through a numerical simulation. 
In the simulation, $n = 3, m = 1, l = 2$ is selected, and the following parameter $(A, B, C, D)$ is considered as a plant. 
\begin{eqnarray}
&A = \begin{bmatrix}0&0&0.8\\1&0&0.5\\0&1&-0.4\end{bmatrix}, B = \begin{bmatrix}1\\0\\0\end{bmatrix}, \label{rei01} \\ &C = \begin{bmatrix}1&0.5&0.3\\0.1&0.3&0.7\end{bmatrix}, D = \begin{bmatrix}0\\0\end{bmatrix}\label{rei02}
\end{eqnarray}
In addition, $M_0=1$ and $M_1 = 3$ are selected, and $V_0, V_1, V_2$ is given as follows.
\begin{eqnarray}
&V_0 = \begin{bmatrix}1&0\\0&1\end{bmatrix}, V_1 = \begin{bmatrix}1&0\\0&0\end{bmatrix}, V_2 = \begin{bmatrix}1&0\\0&0\end{bmatrix}
\end{eqnarray}
The cyclic reformulation of the multirate system can be written as follows. 
\begin{eqnarray}
     &\check{A} = \left[\begin{array}{ccc}
          O_{3,3} & O_{3,3} & A\\
          A &  O_{3,3} & O_{3,3}\\
          O_{3,3} & A &  O_{3,3}
     \end{array} \right], \\&\check{B} = \left[\begin{array}{ccc}
          O_{3,1} & O_{3,1} & B\\
          B &  O_{3,1} & O_{3,1}\\
          O_{3,1} & B &  O_{3,1} 
     \end{array} \right], 
\end{eqnarray}
\begin{equation}
     \check{C} = \left[\begin{array}{ccc}
          V_0 C & O_{2,3} &O_{2,3} \\
          O_{2,3}  & V_1 C & O_{2,3}\\
          O_{2,3} & O_{2,3}  & V_{2}C
     \end{array} \right], \check{D} = O_{6,3}. \nonumber
\end{equation}
Since $M = 3$ holds, the matrix $\check S_1$ is given by
\begin{equation}
     \check S_1 = \left[\begin{array}{ccccc}
          0 & 1 & 0\\
          0 & 0 & 1\\
          1 & 0 & 0 
     \end{array} \right]. 
\end{equation}
By calculating (\ref{mar1}), the parameters $\check H(i) \check S_1^i $ for the plant $P_{ex}$ can be calculated as follows. 
\begin{eqnarray}
&\check H (0) = O_{6,3}, \\ 
     &\check H (1) \check S_1 = \left[\begin{array}{ccc}
          1 & 0 & 0 \\
          0.1 & 0 & 0\\
          0 & 1 & 0\\
          0 & 0 & 0 \\
          0 & 0 & 1\\
          0 & 0 & 0
     \end{array} \right], \\ 
     &\check H (2)\check S_1^2  = \left[\begin{array}{ccc}
          0.5 & 0 & 0 \\
          0.3 & 0 & 0\\
          0 & 0.5 & 0\\
          0 & 0 & 0 \\
          0 & 0 & 0.5\\
          0 & 0 & 0
     \end{array} \right], \\&\check H (3) \check S_1^3 = \left[\begin{array}{ccc}
           0.3 & 0 & 0 \\
          0.7 & 0 & 0\\
          0 & 0.3 & 0\\
          0 & 0 & 0 \\
          0 & 0 & 0.3\\
          0 & 0 & 0
     \end{array} \right], \\
     &\check S_1^4 \check H (4) =  \left[\begin{array}{ccc}
           0.93 & 0 & 0 \\
          0.05 & 0 & 0\\
          0 & 0.93 & 0\\
          0 & 0 & 0 \\
          0 & 0 & 0.93\\
          0 & 0 & 0
     \end{array} \right] \cdots 
\end{eqnarray}
We can see that $\check H(i)\check S_1^i $ is given as diagonal matrices as indicated in Lemma \ref{lemma2}. Due to space constraints, the proof is omitted; however, it is evident that Lemma \ref{lemma3} holds in a similar manner.

By applying an input $u(k)$ to the above system, we obtain outputs $y(k) = [y_0(k), y_1(k)]^T$ of the plant. The rate of $y_0(k)$ is 1 and that of $y_1(k)$ is $3$. Note that the input $u(k)$ is randomly selected for each time step and is not an $M$-periodic signal. Cycled signals $\check u(k)$ and $\check y(k)$ with $M=3$. Then, a subspace identification method is applied for $\check u(k)$ and $\check y(k)$. We use the N4SID method \cite{id1}, which is a kind of subspace identification method, in this simulation. The N4SID method is equipped with the "System Identification Toolbox" on MATLAB and can be easily implemented. 

The parameters $\A_*, \B_*, \C_*, \D_*$ are obtained by using the N4SID method. $\A_* \in R^{9\times 9}$, $\B_* \in R^{9\times 3}$, $\C_* \in R^{6\times 9}$, $\D_* \in R^{6\times 3}$. Then, we calculate $\check S_1^i \check \Hm (i), i = 0, \cdots, 4$ for obtained $\A_*, \B_*, \C_*, \D_*$ as follows. 
\begin{eqnarray}
&\check \Hm (0) = O_{6,3}, \\& \check \Hm  (1) \check S_1 = \left[\begin{array}{ccc}
          1.000 & 0.000 & 0.000 \\
          0.100 & 0.000 & 0.000\\
          0.000 & 1.000 & 0.000\\
          0.000 & 0.000 & 0.000 \\
          0.000 & 0.000 & 1.000\\
          0.000 & 0.000 & 0.000
     \end{array} \right], 
     \end{eqnarray}
     \begin{eqnarray}
     &\check \Hm (2)\check S_1^2  = \left[\begin{array}{ccc}
          0.500 & 0.000 & 0.000 \\
          0.300 & 0.000 & 0.000\\
          0.000 & 0.500 & 0.000\\
          0.000 & 0.000 & 0.000 \\
          0.000 & 0.000 & 0.500\\
          0.000 & 0.000 & 0.000
     \end{array} \right], \\
     &\check \Hm (3) \check S_1^3 = \left[\begin{array}{ccc}
           0.300 & 0.000 & 0.000 \\
          0.700 & 0.000 & 0.000\\
          0.000 & 0.300 & 0.000\\
          0.000 & 0.000 & 0.000 \\
          0.000 & 0.000 & 0.300\\
          0.000 & 0.000 & 0.000
     \end{array} \right], \\
     &\check \Hm (4)\check S_1^4  =  \left[\begin{array}{ccc}
           0.930 & 0.000 & 0.000 \\
          0.050 & 0.000 & 0.000\\
          0.000 & 0.930 & 0.000\\
          0.000 & 0.000 & 0.000 \\
          0.000 & 0.000 & 0.930\\
          0.000 & 0.000 & 0.000
     \end{array} \right], \cdots 
\end{eqnarray}
By checking each matrix $\check S_1^i \check \Hm (i)$, we can confirm that there are diagonal matrices for all $i$ in this simulation result. Although not shown here, it is confirmed that the diagonal matrix can be obtained in the same way when $i$ is $5$ or more. Therefore, we can confirm that Assumption \ref{ass1} holds. Furthermore, we can also confirm that $\check S_1^i\check \Hm (i)$ coincides with $\check S_1^i\check H(i)$ for $i=0,\cdots,4$. Due to space constraints, the details are omitted; however, all matrices except for $\D_*$ are obtained as dense matrices. 

\subsection{Transformation of obtained state-space model}


The matrix parameters are obtained as $\A_*, \B_*, \C_*, \D_*$ by using the subspace identification method with the cycled signals $\check u(k)$ and $\check y(k)$. Unfortunately, $\A_*, \B_*, \C_*, \D_*$ are expected to be dense matrices and are not obtained as a cyclic reformulation structure. A state coordinate transformation for the system parameters ($\A_*$, $\B_*$, $\C_*$, $\D_*$) using the specified transformation matrix $T \in {R}^{Mn\times Mn}$ is considered for obtaining cyclic reformulation in this section. The transformation matrix is derived based on Assumption \ref{ass1}.

The state-space vector $\check x_* \in {R}^{Mn\times 1}$ of a state-space model $\A_*, \B_*, \C_*, \D_*$ is determined. 
The transformation matrix $T$ is set to give the matrices as follows. 
\begin{eqnarray}
\check A_m&=T^{-1}\A_* T&, \check B_m=T^{-1}\B_*\label{eq:T1}\\
\check C_m&=\C_*T&, \check D_m=\D_*\label{eq:T2}
\end{eqnarray}
The following state-space model is obtained using the transformation matrix $T$. 
\begin{equation}
     \label{cycsolution}
     \begin{array}{rcl}
          \check{x}_{tf}(k+1) & = & \check A_m\check{x}_{tf}(k) + \check B_m\check{u}(k)\\
          \check{y}(k) & = & \check C_m \check{x}_{tf}(k) + \check{D}_m\check{u}(k) 
     \end{array}
\end{equation}
The new state $\check x_{tf} \in {R}^{Mn\times 1}$ is given by $\check x_{tf} = T^{-1} \check x_*$. 

By selecting an appropriate $T$ for the obtained parameters $\A_*, \B_*, \C_*, \D_*$, the objective presented in Problem \ref{prob1} will be achieved if $\check A_m, \check B_m, \check C_m, \check D_m$ are given as a cyclic reformulation form as follows. 
\begin{eqnarray}
\label{Amtilde}
\check A_m=\left[\begin{array}{ccccc}
O_{n,n} & \cdots & \cdots & O_{n,n} & A_{m(M-1)} \\
A_{m0} & O_{n,n} & \cdots & O_{n,n} & O_{n,n} \\
O_{n,n} & A_{m1} & \ddots & \vdots & \vdots \\
O_{n,n} & \ddots & \ddots & O_{n,n} & \vdots \\
O_{n,n} & \cdots & O_{n,n} & A_{m(M-2)} & O_{n,n}
\end{array}\right]
\end{eqnarray}
\begin{eqnarray}\label{Bmtilde}
\check{B}_{m}=\left[\begin{array}{ccccc}
O_{n,m} & \cdots & \cdots & O_{n,m} & B_{m(M-1)} \\
B_{m0} & O_{n,m} & \cdots & O_{n,m} & O_{n,m} \\
O_{n,m} & B_{m1} & \ddots & \vdots & \vdots \\
O_{n,m} & \ddots & \ddots & O_{n,m} & \vdots \\
O_{n,m} & \cdots & O_{n,m} & B_{m(M-2)} & O_{n,m}
\end{array}\right]
\end{eqnarray}
\begin{eqnarray}
\label{Cmtilde}
\check C_{m}=\mathrm{diag}\left[\begin{array}{c}
C_{m0}, C_{m1}, \cdots, C_{m{M-1}}
\end{array}\right]
\end{eqnarray}
\begin{eqnarray}
\label{Dmtilde}
\check D_{m}=\mathrm{diag}\left[\begin{array}{c}
D_{m0}, D_{m1}, \cdots, D_{m(M-1)}
\end{array}\right]
\end{eqnarray}
Using the matrices obtained as described above, a coordinate transform matrix $T$ in this paper is defined as follows. 
\begin{eqnarray}
T = \sum_{i = 0}^{n-1}\sum_{j = 0}^{M-1}  \A_*^{Mi+j}\B_* \check S_m^{j+1} \check G_{(Mi+j) \bmod n}  \label{checky3}
\end{eqnarray}
As shown previously, it obviously satisfies $ \check G_i = \check S_m \check G_{(Mi+j) \bmod n}  \check S_n^{-1}$. $T$ in (\ref{checky3}) is equivalent to (\ref{checky4}). 
\begin{eqnarray}
T = \sum_{i = 0}^{n-1}\sum_{j = 0}^{M-1}  \A_*^{Mi+j}\B_*  \check G_{(Mi+j) \bmod n}  \check S_n^{j+1} \label{checky4}
\end{eqnarray}

%
%
The matrix form of $\check F_j$ is given in (\ref{fj}). We should appropriately select $\check F_j (j= 1,\cdots, n)$ based on the rank condition. 
The following theorem holds for the state coordinate transformation matrix $T$ given in (\ref{checky3}) for the case that $T$ is a regular matrix. Also, we can see that (\ref{checky2}) and (\ref{checky3}) are closely related. 

The following theorem is presented to obtain the cyclic reformulation of the derived model. 

\begin{theorem}\label{theo1}
Let $\A_*, \B_*, \C_*, \D_*$ be parameters based on the cycled signal. Assume that Assumption \ref{ass1} is satisfied for these parameters and that the pairs $(\A_*, \B_*)$ and $(\C_*, \A_*)$ are controllable and observable, respectively.
Then the system $\check A_m, \check B_m, \check C_m, \check D_m$, obtained by the state coordinate transformation (\ref{eq:T1}), (\ref{eq:T2}) of $\A_*, \B_*, \C_*, \D_*$ using the transformation matrix $T$ of (\ref{checky3}), has a structure of the cyclic reformulation.
\end{theorem}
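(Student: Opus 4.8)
The plan is to trace $T$ back to a single $Mn\times Mn$ matrix built from the cyclic reformulation of the true plant, and to show that this matrix already has exactly the block shape needed to turn $(\A_*,\B_*,\C_*,\D_*)$ into the cyclic form (\ref{Amtilde})--(\ref{Dmtilde}).

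First I would observe that the cyclic reformulation $(\check A,\check B,\check C,\check D)$ of the underlying multirate plant is a minimal realization: its controllability and observability matrices have rank $Mn$, as shown in Section~\ref{sec3} under Assumption~\ref{ass2} together with $\mathrm{rank}\,A=n$. The identified quadruple $(\A_*,\B_*,\C_*,\D_*)$ is, by hypothesis, controllable and observable, hence also a minimal $Mn$-dimensional realization; and since Assumption~\ref{ass1} forces its Markov parameters $\check\Hm(i)$ to be those of a cyclic reformulation (a converse to Lemma~\ref{lemma3}), it realizes the same transfer function. Consequently there is a unique invertible $P\in R^{Mn\times Mn}$ with $\A_*=P^{-1}\check AP$, $\B_*=P^{-1}\check B$, $\C_*=\check CP$ and $\D_*=\check D$.

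Next I would rewrite $T$. Re-indexing the double sum in (\ref{checky3}) by $p=Mi+j\in\{0,\dots,Mn-1\}$ and using $\check S_m^{M}=I$ to replace $\check S_m^{(p\bmod M)+1}$ by $\check S_m^{p+1}$ gives $T=\sum_{p=0}^{Mn-1}\A_*^{p}\B_*\check S_m^{p+1}\check G_{p\bmod n}$, and substituting $\A_*^{p}\B_*=P^{-1}\check A^{p}\check B$ yields $T=P^{-1}Y$ with
\begin{eqnarray}
Y=\sum_{p=0}^{Mn-1}\check A^{p}\check B\,\check S_m^{p+1}\,\check G_{p\bmod n},\nonumber
\end{eqnarray}
i.e. the analogue of $\check Y$ in (\ref{checky2}) assembled from the true cyclic parameters. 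The core of the argument is then the purely structural identity that, for every $p$, the matrix $\check A^{p}\check B\,\check S_m^{p+1}$ is block diagonal with all of its $M$ diagonal blocks equal to $A^{p}B$; this follows from a direct block-index computation using the cyclic shape of $\check A$, $\check B$ and the cyclic-shift shape of $\check S_m$ (and is the same phenomenon as in Lemma~\ref{lemma2}). Since each $\check G_{p\bmod n}$ of (\ref{gj}) is block diagonal with all blocks equal to $G_{p\bmod n}$, it follows that $Y=\mathrm{diag}[\Gamma,\dots,\Gamma]$ with $\Gamma=\sum_{p=0}^{Mn-1}A^{p}B\,G_{p\bmod n}\in R^{n\times n}$; the standing assumption that $T$ is regular forces $Y$, and hence $\Gamma$, to be invertible.

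Finally I would combine the two steps: from $T=P^{-1}Y$ and the similarity relations, the transformation (\ref{eq:T1}), (\ref{eq:T2}) cancels the unknown $P$ and leaves $\check A_m=Y^{-1}\check AY$, $\check B_m=Y^{-1}\check B$, $\check C_m=\check CY$ and $\check D_m=\check D$. Because $Y$ and $Y^{-1}$ are block diagonal with all diagonal blocks equal to $\Gamma$ and $\Gamma^{-1}$, multiplying on the left by $Y^{-1}$ and on the right by $Y$ leaves every zero/nonzero block pattern unchanged: $\check CY$ and $\check D$ remain block diagonal (with $C_{mr}=V_rC\Gamma$, $D_{mr}=V_rD$), while $Y^{-1}\check B$ and $Y^{-1}\check AY$ keep the cyclic ``sub-diagonal'' block pattern of $\check B$ and $\check A$ (with $B_{mr}=\Gamma^{-1}B$, $A_{mr}=\Gamma^{-1}A\Gamma$). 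This is precisely the cyclic reformulation structure (\ref{Amtilde})--(\ref{Dmtilde}), which establishes the theorem. I expect the genuine difficulty to lie in the first step --- certifying that Assumption~\ref{ass1}, a condition on Markov parameters only, really does guarantee that the identified minimal realization is similar to a bona fide cyclic reformulation so that $P$ exists; once $P$ is available the remaining manipulations are routine block bookkeeping.
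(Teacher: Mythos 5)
Your route is genuinely different from the paper's, and it has one load-bearing gap --- the one you yourself flag at the end. Everything after the existence of the similarity matrix $P$ is correct and routine: the re-indexing of (\ref{checky3}) to $T=\sum_{p=0}^{Mn-1}\A_*^{p}\B_*\check S_m^{p+1}\check G_{p\bmod n}$ is valid since $\check S_m^{M}=I$, the block-index computation showing $\check A^{p}\check B\check S_m^{p+1}$ is block diagonal with all blocks $A^{p}B$ checks out, and conjugation by a block-diagonal $Y$ with identical blocks indeed preserves both the cyclic and the block-diagonal patterns. But the first step is asserted, not proved. Assumption \ref{ass1} is a hypothesis on the identified Markov parameters $\check\Hm(i)$ only; it does not say $\check\Hm(i)=\check H(i)$, so ``it realizes the same transfer function'' as the true plant does not follow from the stated hypotheses. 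The weaker claim you would actually need --- that a minimal realization whose Markov parameters satisfy the block-diagonal property of Assumption \ref{ass1} is similar to \emph{some} cyclic reformulation --- is essentially the content of the theorem itself (it is what the specific $T$ is constructed to exhibit), so invoking it as a ``converse to Lemma \ref{lemma3}'' plus standard minimal-realization theory is circular unless you supply an independent periodic-realization argument, which you do not.

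The paper avoids this entirely by never referring to the true plant. Its proof works directly with the structural consequences of Assumption \ref{ass1} on $(\A_*,\B_*,\C_*,\D_*)$: it shows $\C_*T$ is block diagonal because each summand $\C_*\A_*^{Mi+j}\B_*\check S_m^{j+1}\check G$ is; it introduces the block-diagonal invertible matrix $X_*=\sum_{i,j}\check F_j\check S_l^{j}\C_*\A_*^{Mi+j}T$ and observes that $X_*T^{-1}\B_*$ and $Z=X_*T^{-1}\A_*T$ are sums of cyclic matrices, so that left-multiplication by the block-diagonal $X_*^{-1}$ yields cyclic $T^{-1}\B_*$ and $T^{-1}\A_*T$. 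This sandwich argument is what your plan is missing: it replaces the unavailable similarity to the true $(\check A,\check B,\check C,\check D)$ with purely algebraic bookkeeping on products of cyclic and block-diagonal matrices, and in doing so proves the more general statement in which the identified Markov parameters need only have the right sparsity pattern, not the right values. If you want to salvage your approach, you would need to first prove the periodic realization result (Assumption \ref{ass1} implies existence of a minimal cyclic-reformulation realization of the sequence $\check\Hm(i)$); with that in hand your argument also yields the additional information that all blocks $A_{mr}$, $B_{mr}$ coincide, which the paper only observes numerically.
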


\begin{proof}
See Appendix. 
\end{proof} 

When Assumption \ref{ass1} is satisfied by the subspace identification with cycled signals, the obtained model with the coordinate transform using (\ref{checky3}) is given as cyclic reformulation structure by Theorem \ref{theo1}. 
While there is freedom for the coordinate transformations of each parameter matrix that is given as a cyclic reformulation, in addition to the obtained $T^{-1}$, this degree of freedom for the coordinate transformation can be achieved by using $T\Phi$ instead of $T$, where $\Phi$ is a block diagonal structure matrix given as:  
\begin{eqnarray}
\Phi = \left[\begin{array}{cccc}
          \Phi_1 & O_{n,n} & \cdots & O_{n,n}\\
          O_{n,n} & \ddots & \ddots & \vdots\\
          \vdots & \ddots  & \Phi_{M-1} & O_{n,n}\\
          O_{n,n} & \cdots & O_{n,n} & \Phi_M
     \end{array} \right],
\end{eqnarray}
where $\Phi_i (i=1,\cdots,M)$ should be regular matrices. 

Note that the coordinate transformation matrix of this paper is different from the case of the LPTV system shown in \cite{okajima}. In the LPTV system of \cite{okajima}, the system is assumed to be observable and controllable at each time instant. However, multirate systems typically do not satisfy this assumption since observability and controllability properties vary across sampling instants due to the inherent structure of multirate sensing. Consequently, if we attempt to directly apply the method from \cite{okajima} to multirate systems, the resulting coordinate transformation matrix $T$ would be singular, making it impossible to properly determine the cycling structure. Our proposed approach overcomes this fundamental limitation by developing a new coordinate transformation method tailored explicitly for multirate systems that ensures the regularity of the transformation matrix. 

\subsection{System identification algorithm}\label{sec43}

In the previous sections, we explained the identification using cycled signals and the coordinate transformation to realize the cycling structure.
Consequently, the cyclic identification algorithm for multirate sensing systems is summarized as following Algorithm \ref{algo11}. 

\begin{algorithm}
\caption{System Identification for multirate sensing systems}
\label{algo11}
\begin{algorithmic}
\STATE [1.] Determine $M$ from each output rate $M_i$ of the multirate system and prepare cycled input and cycled output signals by the obtained input-output data from the system (\ref{eq:P_multi}), (\ref{eq:P_multi2}). 
\STATE [2.] Compute $\A_*$, $\B_*$, $\C_*$,$\D_*$ using the existing subspace identification method with the cycled signals. 
\STATE [3.] Cyclic reformulation is derived using the obtained $\A_*$, $\B_*$, $\C_*$, $\D_*$ with the specific state coordinate transformation matrix $T$ from (\ref{checky3}). 
\STATE [4.] Parameters $A_{mi}, B_{mi}, C_{mi}, D_{mi}$ are extracted from the components of the cyclic reformulation $\check A_m, \check B_m, \check C_m, \check D_m$. 
\end{algorithmic}
\end{algorithm}

The system identification process for multirate sensing systems using Algorithm \ref{algo11} proceeds as follows. In Step 1, the input-output signals of the multirate sensing system are transformed into a cyclic structure, resulting in cycled input-output signals. In Step 2, these cycled signals are applied to a subspace identification method to determine $\A_*$, $\B_*$, $\C_*$, and $\D_*$. The dimensions of the resulting matrices are as follows: $\A_*\in {R}^{Mn\times Mn}$, $\B_*\in {R}^{Mn\times Mm}$, $\C_*\in {R}^{Ml\times Mn}$, and $\D_*\in {R}^{Ml\times Mm}$. Based on Assumption \ref{ass1}, $T$ is derived by (\ref{checky3}) and apply coordinate transformation to the system ($\A_*$, $\B_*$, $\C_*$, and $\D_*$) in Step 3. Then, we obtain $\check A_m, \check B_m, \check C_m, \check D_m$. Since $\check A_m, \check B_m, \check C_m, \check D_m$ are shown to be cyclic reformulations in Theorem \ref{theo1}, the matrix elements of $\check A_m, \check B_m, \check C_m, \check D_m$ can be extracted to obtain $A_{mi}, B_{mi}, C_{mi}, D_{mi}$.

Steps 1 and 3 require minimal computation time. In Step 2, the cycled signals are used for parameter estimation through conventional subspace identification methods, enabling the solution to be obtained with computational time comparable to that required for identifying a linear time-invariant system. 

\section{Numerical Example}\label{sec5}

\subsection{Setting of the multirate system}

In this section, numerical simulations of the proposed system identification algorithm are verified. Plant parameters, shown in (\ref{rei01}), (\ref{rei02}) are used in this section. In addition, the rates of the sensor outputs are given as $M_0=2$ and $M_1 = 3$. $M = 6$ is derived from these rates. In addition, $V_i (i=0,\cdots,5)$ are given based on the rates as follows. 
\begin{eqnarray}
&V_0 = \begin{bmatrix}1&0\\0&1\end{bmatrix}, V_1 = \begin{bmatrix}0&0\\0&0\end{bmatrix}, V_2 = \begin{bmatrix}1&0\\0&0\end{bmatrix}, \\
&V_3 = \begin{bmatrix}0&0\\0&1\end{bmatrix}, V_4 = \begin{bmatrix}1&0\\0&0\end{bmatrix}, V_5 = \begin{bmatrix}0&0\\0&0\end{bmatrix} 
\end{eqnarray}
The pair $(A,B)$ is controllable and $(V_0 C, A)$ is observable. Then, ${\bf{rank}} {\Psi}_c = 18$ and ${\bf{rank}} {\Psi}_o = 18$ hold for the considered system. We can see that Assumption \ref{ass2} is satisfied for the plant. 

Note that if the model parameters $(A, B, C, D)$ in (\ref{rei01}), (\ref{rei02}) are regarded as an LTI system, the transfer function for each input/output can be obtained as follow. 
\begin{eqnarray}
  &TF_1 = \frac{z^2+0.9z}{z^3+0.4z^2-0.5z-0.8}\\ &TF_2 = \frac{0.1z^2+0.34z+0.77}{z^3+0.4z^2-0.5z-0.8}
\end{eqnarray}
In the later part of this section, $TF_1(z)$ and $TF_2(z)$ are used to verify the estimation accuracy of the proposed system identification algorithm. 

\subsection{Simulation result}


By applying an input $u(k)$ for the multirate system, we obtain an output $y(k)$ of the plant. Note that the input $u(k)$ in the simulation is randomly selected for each time step and is not a $6$-periodic signal. 

Cycled signals $\check u(k) \in R^{6}$ and $\check y(k) \in R^{12}$ is obtained from $u(k)$ and $y(k)$. Then, a subspace identification method is applied for the obtained $\check u(k)$ and $\check y(k)$. We use the N4SID method \cite{id1}, which is a kind of subspace identification method, in this simulation. The parameters are derived as $\A_* \in R^{18\times 18}, \B_* \in R^{18\times 6}, \C_* \in R^{12\times 18}, \D_* \in R^{12\times 6}$ by using the N4SID method. Markov parameters $\check \Hm (i)$ are obtained from the parameters $( \A_*, \B_*, \C_*, \D_* )$. Due to space constraints, we omitted the results, but we confirmed that $\|H(i) -\Hm(i)\|$ is very small for all $i$. Lemma \ref{lemma3} is satisfied for $H(i)$. Therefore, Assumption \ref{ass1} is satisfied for the obtained system by the subspace identification method. 

We give $G_i$ as follows. 
\begin{eqnarray}
&G_0 = \begin{bmatrix}1&0&0\end{bmatrix}, G_1 = \begin{bmatrix}0&1&0\end{bmatrix},\\& G_2 = \begin{bmatrix}0&0&1\end{bmatrix}
  \end{eqnarray}
$G:=[G_0^T,G_1^T,G_2^T]^T$ is an identity matrix and is full-rank. 
$\check G_i$ is given to satisfy (\ref{rankg}) and their matrix size are $6 \times 18$. 

By applying step 3. in Algorithm \ref{algo11}, $\A_*, \B_*, \C_*$ and $\D_*$ are transformed by the following coordinate transform matrix $T$.
\begin{eqnarray}
T = \sum_{i = 0}^{2}\sum_{j = 0}^{5}  \A_*^{6i+j}\B_* \check S_1^{j+1} \check G_i  \label{henkanTinv}
\end{eqnarray}
The matrix rank of $T$ is 18. Since $T$ is given as a regular matrix, the matrix $T^{-1}$ is obtained as the inverse matrix of (\ref{henkanTinv}). 

Furthermore, the following matrices are obtained by applying a state coordinate transformation matrix (\ref{henkanTinv}) to the obtained $(\A_*,\B_*,\C_*,\D_*)$ by the subspace identification. We obtain $(\check A_m, \check B_m, \check C_m, \check D_m)$ and satisfy the cyclic reformulation structure. Thus, having obtained the cyclic reformulation, we can obtain $A_{mi}, B_{mi}, C_{mi}, D_{mi}, i=0,\cdots, 5$ as their elements by Step 4 in Algorithm \ref{algo11}. 
\begin{eqnarray}
&A_{m0} = \begin{bmatrix}1.0129 & -2.0947 & 2.3008\\0.8062 & -0.5788 & 1.3884\\-0.5685 & 1.8355 & -0.8341\end{bmatrix},\label{a0} \\&A_{m1} = \begin{bmatrix}1.0129 & -2.0947 & 2.3008\\0.8062 & -0.5788 & 1.3884\\-0.5685 & 1.8355 & -0.8341\end{bmatrix}, \\
&A_{m2} = \begin{bmatrix}1.0129 & -2.0947 & 2.3008\\0.8062 & -0.5788 & 1.3884\\-0.5685 & 1.8355 & -0.8341\end{bmatrix}, \\&A_{m3} = \begin{bmatrix}1.0129 & -2.0947 & 2.3008\\0.8062 & -0.5788 & 1.3884\\-0.5685 & 1.8355 & -0.8341\end{bmatrix}, \\
&A_{m4} = \begin{bmatrix}1.0129 & -2.0947 & 2.3008\\0.8062 & -0.5788 & 1.3884\\-0.5685 & 1.8355 & -0.8341\end{bmatrix}, \\&A_{m5} = \begin{bmatrix}1.0129 & -2.0947 & 2.3008\\0.8062 & -0.5788 & 1.3884\\-0.5685 & 1.8355 & -0.8341\end{bmatrix},
\end{eqnarray}
\begin{eqnarray}
&B_{m0} = \begin{bmatrix}   -0.5783 \\  -0.7672\\0.8871\end{bmatrix}, B_{m1} = \begin{bmatrix}   -0.5783 \\  -0.7672\\0.8871\end{bmatrix}, \label{b0}
\\ &B_{m2} = \begin{bmatrix}   -0.5783 \\  -0.7672\\0.8871\end{bmatrix}, B_{m3} = \begin{bmatrix}   -0.5783 \\  -0.7672\\0.8871\end{bmatrix}, 
\\ &B_{m4} = \begin{bmatrix}   -0.5783 \\  -0.7672\\0.8871\end{bmatrix}, B_{m5} = \begin{bmatrix}   -0.5783 \\  -0.7672\\0.8871\end{bmatrix}, 
\end{eqnarray}
\begin{eqnarray}
&C_{m0} = \begin{bmatrix} 1.8058 & 2.3016 & 4.2947 \\ 0.6785 & 2.1105 & 2.3801\end{bmatrix}, \\ &C_{m1} = \begin{bmatrix} 0.0000 & 0.0000 & 0.0000 \\ -0.0000 & -0.0000 & -0.0000\end{bmatrix}, \label{c0} \\ 
&C_{m2} = \begin{bmatrix} 1.8058 & 2.3016 & 4.2947 \\  -0.0000 & -0.0000 & -0.0000\end{bmatrix}, \\ &C_{m3} = \begin{bmatrix} 0.0000 & 0.0000 & 0.0000 \\ 0.6785 & 2.1105 & 2.3801\end{bmatrix},  \\ 
&C_{m4} = \begin{bmatrix} 1.8058 & 2.3016 & 4.2947 \\ -0.0000 & -0.0000 & -0.0000\end{bmatrix}, \\ &C_{m5} = \begin{bmatrix} 0.0000 & 0.0000 & 0.0000 \\ -0.0000 & -0.0000 & -0.0000\end{bmatrix}, 
\end{eqnarray}
\begin{eqnarray}
&D_{m0} = \begin{bmatrix} 0.0000 \\ 0.0000\end{bmatrix}, D_{m1} = \begin{bmatrix} 0.0000 \\ 0.0000\end{bmatrix}, \label{d0}\\
&D_{m2} = \begin{bmatrix} 0.0000 \\ 0.0000\end{bmatrix}, D_{m3} = \begin{bmatrix} 0.0000 \\ 0.0000\end{bmatrix},  \\
&D_{m4} = \begin{bmatrix} 0.0000 \\ 0.0000\end{bmatrix}, D_{m5} = \begin{bmatrix} 0.0000 \\ 0.0000\end{bmatrix}.
\end{eqnarray}


The model derived using Algorithm \ref{algo11} requires minimal computation time, comparable to that of identifying linear time-invariant systems, making it an efficient and practical approach. We can verify that $A_{m0} = A_{m1} = \cdots = A_{m5}$ is satisfied numerically. 
Due to the degrees of freedom in coordinate transformation, the obtained matrix differs from the predefined plant parameters. Therefore, we consider a comparison based on transfer functions to demonstrate that the system has been correctly identified. When $(A_{m0}, B_{m0}, C_{m0}, D_{m0})$ from (\ref{a0}), (\ref{b0}), (\ref{c0}), (\ref{d0}) are considered as LTI system parameters, the transfer function can be obtained as follows. 
\begin{eqnarray}
  &TF_{m1} = \frac{z^2+0.9z+1.29\times 10^{-15}}{z^3+0.4z^2-0.5z-0.8} \\ &TF_{m2} = \frac{0.1z^2+0.34z+0.77}{z^3+0.4z^2-0.5z-0.8}
\end{eqnarray}
We can find that $TF_{m1}$ and $TF_1$ are infinitely close to each other. Also, $TF_{m2}$ and $TF_2$ are infinitely close to each other. We can confirm that the obtained parameter matrices $A_{mi}, B_{mi}, C_{mi}, D_{mi}, i = 0,\cdots, 5$ are well approximated with the parameter matrix of the given plant. Therefore, we can conclude from these results that the objective of the paper, shown in Problem \ref{prob1}, is successfully achieved. 

We also note that the numerical results demonstrate that, unlike conventional methods, our approach does not rely on specific periodic input signals. 

\section{Conclusion}\label{sec6}

In this paper, we propose a cyclic reformulation-based system identification algorithm for multirate systems. First, properties of the cyclic reformulation of multirate systems are derived. The controllability and observability of the cycled system are characterized. In addition, the characteristics of the Markov parameters are summarized. An algorithm for obtaining plant parameters using a given multirate data is proposed. A coordinate transform matrix for the identified model parameters based on cycled signals is proposed based on the characteristics of Markov parameters of the cycling structure. The effectiveness of the proposed system identification algorithm for multirate systems is verified using numerical examples. Especially, it is noteworthy that equations $TF_1 = TF_{m1}$ and $TF_2 = TF_{m2}$ are satisfied by our proposed algorithm. Our approach is anticipated to provide practical solutions to various challenges associated with the modeling of control systems with complex sensor networks.

While our numerical examples consistently validate Assumption \ref{ass1} regarding the sparse structure of the Markov parameters, future work will focus on developing a formal mathematical proof of this property and its stability in the presence of noise. This theoretical foundation will support the creation of more robust identification methods that can effectively handle real-world noise while maintaining high accuracy.

\appendix
\section{Proof of Theorem \ref{theo1}}
From Assumption \ref{ass1}, $\D_*$ is a block diagonal matrix. We aim to prove that $T^{-1} \A_* T$ and $T^{-1} \B_*$ are cyclic matrices, and that $\C_* T$ is a block diagonal matrix.

If Assumption \ref{ass1} holds, it can be shown that the following matrix exhibits a block diagonal structure. 
\begin{eqnarray}
\C_* T =  \sum_{i = 0}^{n-1}\sum_{j = 0}^{M-1} \C_* \A_*^{Mi+j}\B_*  \check G_{(Mi+j) \bmod n}  \check S_n^{j+1}\label{proofeq1}
\end{eqnarray}
This is because $ \C_* \A_*^{Mi+j}\B_* \check S_n^{j+1}$ is a block diagonal matrix, and $\check G_{(Mi+j) \bmod n} $ is also a block diagonal matrix for any $i$ and $j$. Consequently, $\C_*T$ is a block diagonal matrix.

Then, we prove that $T^{-1}\B_*$ is given as a cyclic matrix. 
Following a similar calculation to (\ref{proofeq1}), matrices $\check S_l^j \C_* \A_*^j T$ ($k=0,1,\cdots$) are regarded as block diagonal matrices whose matrices sizes are $Ml\times Mn$. 


Since the matrices $\check S_l^{j} \C_* \A_*^j T$ are block diagonal matrices for any $j$, the following matrix $X_*$ can be regarded as a block diagonal matrix whose size is $Mn\times Mn$. 
\begin{eqnarray}
X_* = \sum_{i=0}^{n-1}\sum_{j = 0}^{M-1} \check F_j \check S_l^{j} \C_* \A_*^{Mi+j} T. \label{proofeq5}
\end{eqnarray}

We also confirm that $\check S_l^{k}\C_*\A_*^k\B_* $ ($k=0,1,\cdots$) are given as cyclic matrices based on Assumption \ref{ass1}. Therefore, $X_* T^{-1} \B_*$ is also given as a cyclic matrix. In this paper, we assume the multirate system to be observable, and $X_*$ is considered a regular matrix through the appropriate choice of $\check F_j$. Since $X_*$ is a block diagonal matrix, $X_*$ is invertible, and $X_*^{-1}$ is also a block diagonal matrix. By multiplying the block diagonal matrix $X_*^{-1}$ from the right-hand side of the cyclic matrix $X_* T^{-1}\B_*$, it is clear that $T^{-1} \B_*$ can be represented as a cyclic matrix.


Finally, we prove that $T^{-1}\A_* T$ is given as a cyclic matrix. 
The following matrix $Z_{ij}$ is given as a cyclic matrix for any $i,j$ from Assumption \ref{ass1}.
\begin{eqnarray}
Z_{ij} = \check S_l^{i}\C_*\A_*^{i+j} \B_* \check S^{j}
\end{eqnarray}
In addition, $Z_{ij}$ can be rewritten as follows. 
\begin{eqnarray}\label{proofeq2}
Z_{ij} = \check S_l^{i-1}\C_*\A_*^{i-1}T\, T^{-1}\A_* T\, T^{-1} \A_*^{j} \B_* \check S_m^{j+1}
\end{eqnarray}
By using the block diagonal matrices $\check F_i$ and $\check G_j$, it is obvious that the following terms are given as cyclic matrices for any $i,j$. 
\begin{eqnarray}\label{proofeq3}
\check F_{i-1} Z_{ij} \check G_j
\end{eqnarray}
Then, the following matrix $Z$ is obviously given as a cyclic matrix by using the characteristics about (\ref{proofeq2}) and (\ref{proofeq3}).
\begin{eqnarray}\label{proofeq4}
Z &= \left(\sum_{i=1}^{Mn} \check F_{i-1}\check S_l^{i-1}\C_*\A_*^{i-1}T\right) T^{-1}\A_* T\nonumber \\&\cdot T^{-1}\left(\sum_{j=0}^{Mn-1} \A_*^{j} \B_* \check S_m^{j+1}\check G_j\right)
\end{eqnarray}
In (\ref{proofeq4}), the right hand side term $\left( T^{-1} \sum_{j=0}^{Mn-1}\A_*^{j} \B_* \check S_m^{j+1}\check G_j \right)$ is an identity matrix by (\ref{checky3}). The right-hand side term is $X_*$. Therefore, the following equation holds.
\begin{eqnarray}\label{proofeq6}
Z = X_* T^{-1}\A_* T 
\end{eqnarray}
By multiplying the block diagonal matrix $X_*^{-1}$ from right hand side in (\ref{proofeq6}), $T^{-1} \A_* T$ is given by $T^{-1}A_* T = X_*^{-1} Z$. Since $Z$ is a cyclic matrix and $X_*^{-1}$ is a block diagonal matrix, $T^{-1}A_* T$ is a cyclic matrix. This concludes the proof. 




\end{document}